\documentclass[12pt]{article}

\usepackage{header}
\usepackage{tikz}

\title{\Large{ {\bf A simple characterization of single-peaked domains}}}

\author{Mihir Bhattacharya%
\thanks{Department of Economics, Ashoka University, Rajiv Gandhi Education City, Rai, Sonipat, Haryana, India - 131029. Email: \texttt{mihir.bhattacharya@ashoka.edu.in}}\; and Anup Pramanik%
\thanks{Department of Economics, Shiv Nadar Institution of Eminence, NH - 91, Gautam Buddha Nagar, Uttar Pradesh, India - 201314. Email: \texttt{anup.pramanik@snu.edu.in}.}}

\begin{document}

\maketitle
 
\begin{abstract} 

This paper characterizes the single-peaked domain on a tree via the strategy-proofness of extreme rules defined on that tree. For any tree, these rules are unanimous and anonymous on any preference domain. In particular, we show that they are strategy-proof only on the single-peaked domain associated with that tree.

\end{abstract}

\noindent {\sc Keywords}. social choice function; strategy-proofness; extreme rules on a tree; single-peaked domain\\

\noindent {\sc JEL Classification}. D71, D72, D82

\newpage

\section{Introduction}

Single-peaked preferences occupy a central place in political economy and social choice. First formalized by \cite{black1948rationale}, they can be  informally described as follows. Alternatives are arranged along an underlying order. A preference ordering is single-peaked with respect to this order if for any triple of alternatives $a,b$ and $c$ such that $b$ is between $a$ and $c$, either $b$ is preferred to $a$ or $b$ is preferred to $c$. A collection of preference orderings constitutes a single-peaked domain when all preferences are single-peaked with respect to a common underlying order. This restriction emerges naturally in many economic and political contexts. More importantly, it has powerful implications for aggregation. In particular, single-peakedness ensures the existence of well-behaved social choice functions, both in the sense of Arrow’s framework and under strategic considerations (see \cite{Arrow51} and \cite{moulin1980}). While the classical definition of single-peaked preference is based on an exogenous order over alternatives, many economic environments are more naturally represented by networks, with tree structures providing a canonical generalization (see, for example, \cite{demange1982single}, \cite{danilov1994structure} and \cite{schummer2002strategy}). In this paper, our objective is to provide a characterization of the single-peaked domain on a tree.

We characterize the domain via the strategy-proofness of a class of rules, called extreme rules, defined with respect to a fixed tree whose nodes coincide with the set of alternatives. Each extreme rule is associated with a designated leaf of the tree. Given any preference profile, consider the set of top-ranked alternatives and the minimal connected subgraph that contains them. The rule selects the alternative in this subgraph that is closest to the designated leaf. These rules are well-defined for any domain of strict preferences and satisfy unanimity and anonymity. However, whether they are strategy-proof depends on the domain under consideration. Our main result is as follows. Fix a tree and consider the set of all extreme rules on that tree. A domain is single-peaked with respect to this tree if and only if every extreme rule is strategy-proof.

A strand of the literature provides characterizations of the single-peaked domain. \cite{ballester2011characterization} show that a preference profile is single-peaked if and only if it satisfies a collection of restrictions on all triples and quadruples of alternatives. \cite{puppe2018single} characterizes single-peaked domains as Condorcet domains that are minimally rich, connected, and contain a pair of completely reversed preference orders. \cite{chatterji2016characterization} show that a domain is single-peaked on a tree if and only if it admits a strategy-proof, unanimous, tops-only random social choice function satisfying a compromise property. Our result complements this literature by providing a characterization of single-peaked domains on trees via the strategy-proofness of a simple class of rules defined with respect to the underlying tree.

Another strand of the literature seeks to overcome the \cite{gibbard1973manipulation}-\cite{satterthwaite1975strategy} impossibility by fixing a rule with desirable properties and characterizing the domains on which it is strategy-proof; see, for instance, \cite{barbera1998maximal}, \cite{barbie2006non}, \cite{sanver2009strategy}, and \cite{bandhu2022strategy}. In contrast, our approach proceeds in the opposite direction. We characterize the single-peaked domain on a tree via the strategy-proofness of the class of extreme rules defined on that tree.

The remainder of the paper is organized as follows. Section \ref{Model} introduces the model. Section \ref{extreme} introduces the class of extreme rules on a tree and studies their basic properties. Section \ref{MR} presents and proves the main characterization result.

\section{Model}\label{Model}
Let $X$ be a finite set of alternatives with $|X|\geq 3$. The set of agents is $N = \{1, \ldots, n\}$. For each agent $i \in N$,  let $P_i$ denote agent $i$'s preference, which is a strict ordering over the elements of $X$.\footnote{A strict ordering over $A$ is  a \textit{complete}, \textit{transitive} and \textit{antisymmetric} binary relation over $X$.} For any $a,b \in X$, we write $a\;P_i\;b$ if agent $i$ strictly prefers 
alternative $a$ to $b$ under $P_i$. Let $\mathbb{P}$ denote the set of all strict orderings over $X$, referred to as the \emph{unrestricted domain}. An arbitrary subset $\mathbb{D} \subseteq \mathbb{P}$ is called a \emph{domain}. Throughout this paper, we assume that all agents share the same preference domain.

A preference profile $P = (P_1,\ldots,P_n) \in \mathbb{D}^n$ specifies one preference for each agent. For any subset $S \subseteq N$, let $P_S = (P_i)_{i\in S}$ denote the sub-profile of agents in $S$, and write $P=(P_S,P_{-S})$. In particular, for a single agent $i$, we write $P=(P_i,P_{-i})$. For any $P_i \in \mathbb{D}$, let $\tau(P_i)$ denote agent $i$'s top-ranked alternative. For any subset $S \subseteq N$, let $\tau(P_S)=\{\tau(P_i)\mid i\in S\}$ denote the set of top-ranked alternatives of agents in $S$. In particular, for a profile $P \in \mathbb{D}^n$, we write $\tau(P)=\tau(P_N)$.

\begin{defn}\rm
A \emph{social choice function} (scf), or a \emph{rule}, is a mapping $f : \mathbb{D}^n \rightarrow X$.
\end{defn}

The following incentive property of a scf is central in the mechanism design literature and will play a key role in our paper.

\begin{defn}\rm
A scf $f$ is \emph{manipulable} by agent $i$ at a profile $P\in \mathbb{D}^n$ via $P'_i\in \mathbb{D}$ if
\[
f(P'_i,P_{-i})\;P_i\;f(P).
\]
A scf $f$ is \emph{strategy-proof} if it is not manipulable by any agent at any profile.
\end{defn}

We assume that an agent's preference is her private information.  Strategy-proofness, therefore, states that each agent has a weakly dominant strategy to report her preferences truthfully: no agent can benefit from misreporting, regardless of her beliefs about the reports of others.

We also consider the following standard properties of scfs.

\begin{defn}\rm
A scf $f$ satisfies \emph{efficiency} if for every profile $P\in\mathbb{D}^n$, there is no alternative $a\in X$ such that
\[
a\;P_i\;f(P) \quad \text{for all } i\in N.
\]
\end{defn}

Efficiency requires that the rule never selects a Pareto dominated alternative.

\begin{defn}\rm
A scf $f$ satisfies \emph{unanimity} if for every profile $P \in \mathbb{D}^n$ and every alternative $a \in X$, whenever $\tau(P)=\{a\}$, we have
\[
f(P)=a.
\]
\end{defn}

Note that efficiency implies unanimity. A \emph{permutation} of $N$ is a bijection $\sigma:N\to N$. For a profile $P=(P_1,\ldots,P_n)\in\mathbb{D}^n$, define $P^\sigma=(P_{\sigma(1)},\ldots,P_{\sigma(n)})$. 

\begin{defn}\rm A scf $f$ satisfies \emph{anonymity} if for every $P\in\mathbb{D}^n$ and every permutation $\sigma$ of $N$,
\[
f(P^\sigma)=f(P).
\]
\end{defn}

Anonymity requires that the rule treat all agents symmetrically: the outcome depends only on the profile of preferences and not on the identities of the agents.

\section{Extreme Rules on a Tree}\label{extreme}
We introduce a class of scfs that will be central to our analysis. To do so, we first introduce some graph-theoretic notation.
Let $G=(X,E)$ be a graph on $X$, where the node set is $X$ and $E$ consists of unordered pairs of elements of $X$. A \emph{path} from $x$ to $y$ in $G$ is a sequence of distinct nodes $(x_0,x_1,\dots,x_k)$ such that $x_0=x$, $x_k=y$, and
$\{x_{j-1},x_j\}\in E$ for all $j=1,\dots,k$. The graph $G$ is \emph{connected} if for every $x,y\in X$ there exists a path from $x$ to $y$. A \emph{cycle} is a sequence of nodes $(x_0,x_1,\dots,x_k)$ with $k\ge 3$ such that $x_0=x_k$ and $(x_0,x_1,\dots,x_{k-1})$ is a path in $G$. We assume that $G$ is a \emph{tree}, that is, a connected graph with no cycles. Equivalently, in a tree there exists a unique path between any
two nodes in $X$.

For any $x,y\in X$ with $x\neq y$, let $P(x,y)$ denote the set of nodes on the unique path from $x$ to $y$ in $G$. For completeness, define $P(x,x)=\{x\}$ for every $x\in X$.  The \emph{distance} between $x$ and $y$ is defined as $d(x,y)=|P(x,y)|-1$, the number of edges on this path. 

For any non-empty $S\subseteq X$, the \emph{path hull} of $S$ in $G$ is
\[
H(S)=\bigcup_{x,y\in S} P(x,y),
\]
that is, the set of nodes lying on the path between any two nodes in $S$.\footnote{This set is also known as the \emph{Steiner hull} of $S$ in $G$, i.e., the smallest connected sub-graph of $G$ containing $S$. A graph $H=(Y,F)$ is a \emph{sub-graph} of $G$ if $Y\subseteq X$ and $F\subseteq \{\{x,y\}\in E : x,y\in Y\}$.} Since $G$ is a tree, for every non-empty set $S \subseteq X$ and every $x \in X$, the set
\[
\arg\min_{y \in H(S)} d(x,y)
\]
contains a unique node. Denote this node by $\pi(x,H(S))$, the \emph{distance minimizer of $x$ on $H(S)$}. With a slight abuse of notation, we write $\pi(x,S)$ instead of $\pi(x,H(S))$.

Finally, a node $\ell\in X$ is a \emph{leaf} (or \emph{terminal} node) if there exists exactly one node $x\in X$ such that $\{\ell,x\}\in E$. Since $G$ is a tree, it follows that $G$ contains at least two leaf nodes.

\begin{defn}\rm
A social choice function $f : \mathbb{D}^n \rightarrow X$ is called an
\emph{extreme rule on the tree $G$} if there exists a leaf $\ell\in X$
such that for every profile $P \in \mathbb{D}^n$,
\[
f(P)=\pi(\ell,\tau(P)).
\]
\end{defn}

For each leaf $\ell\in X$, let $f^\ell$ denote the extreme rule associated with $\ell$. 
Let
\[
\mathcal{F}(G)=\{f^\ell : \ell \text{ is a leaf of } G\}
\]
denote the class of extreme rules on the tree $G$.

We conclude this section by highlighting several basic properties of extreme rules on a tree.

\begin{remark}\rm
Extreme rules are well-defined for any domain $\mathbb{D} \subseteq \mathbb{P}$. Their definition depends only on the underlying tree $G$ and the reported peaks $\tau(P)$, and therefore does not impose any restriction on the preference domain.
\end{remark}

\begin{remark}\rm
Every extreme rule $f^\ell \in \mathcal F(G)$ satisfies anonymity and unanimity on any domain $\mathbb{D} \subseteq \mathbb{P}$.
Anonymity follows since the rule depends only on the set of reported peaks $\tau(P)$.
Unanimity holds because if $\tau(P)=\{x\}$, then $H(\tau(P))=\{x\}$ and hence $f^\ell(P)=x$.
\end{remark}

\begin{remark}\rm
The efficiency of extreme rules depends on the structure of the underlying tree and the domain of preferences. If the tree is a line, then every extreme rule is Pareto efficient on any domain $\mathbb{D} \subseteq \mathbb{P}$. Indeed, in this case the outcome of the rule at any profile coincides with the top-ranked alternative of some agent, and therefore cannot be Pareto dominated. If the tree is not a line, the efficiency of extreme rules depends on the domain under consideration. In particular, if $\mathbb{D}=\mathbb{P}$ and the tree is not a line, then every extreme rule fails efficiency. We omit the details.
\end{remark}

\section{The Main Result}\label{MR}

This section presents the main result of the paper, which provides a characterization of single-peaked domains on a tree in terms of the strategy-proofness of extreme rules.

A domain $\mathbb{D}$ is \emph{minimally rich} if for every alternative $a \in X$, there exists a preference $P_i \in \mathbb{D}$ such that $a$ is the top-ranked alternative under $P_i$. We now define single-peaked preferences on a tree.

\begin{defn}\rm
A preference $P_i \in \mathbb{P}$ is \emph{single-peaked on the tree $G$} if, for every distinct $a,b \in X$, whenever $b \in P(\tau(P_i),a)$, we have
$b \; P_i \; a$.
\end{defn}

Let $\mathbb{P}(G)$ denote the set of preferences that are single-peaked on the tree $G$.  
A domain $\mathbb{D} \subseteq \mathbb{P}$ is \emph{single-peaked on $G$} if $\mathbb{D} \subseteq \mathbb{P}(G)$.

We are now ready to state the main theorem.

\begin{theorem}\label{thm1}
Let $G$ be a tree on $X$ and let $\mathbb{D}$ be a minimally rich domain. The following statements are equivalent:
\begin{enumerate}
\item Every $f\in \mathcal{F}(G)$ is strategy-proof on $\mathbb{D}$.
\item $\mathbb{D}$ is single-peaked on $G$.
\end{enumerate}
\end{theorem}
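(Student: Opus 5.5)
The plan is to prove the two implications separately. In both directions the basic object is the distance minimizer $\pi(\ell,\cdot)$ on path hulls of at most two pieces. Throughout I take $n\geq 2$; if $n=1$, every extreme rule simply returns the reported peak, so the failure of single-peakedness cannot be detected and the theorem presumably assumes $n \ge 2$.

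\emph{(2)$\Rightarrow$(1).} Fix a leaf $\ell$, an agent $i$ with true preference $P_i\in\mathbb{P}(G)$ and peak $t=\tau(P_i)$, and the others' peaks $S'=\tau(P_{-i})\neq\emptyset$. Let $z=\pi(\ell,S')$.
\begin{enumerate}
\item First I show that $H(S'\cup\{t\})=H(S')\cup P(t,z')$, where $z'=\pi(t,S')$.
\item Next I show that the truthful outcome $x=f^\ell(P)$ equals the median $m$ of $\ell,t,z$, i.e.\ the unique node lying on all three paths $P(\ell,t)$, $P(t,z)$, $P(\ell,z)$. The reason is that every path from $\ell$ into $H(S')$ passes through $z$, and the point of $P(t,z)$ nearest $\ell$ is $m$, which lies on $P(\ell,z)$ and is therefore at least as close to $\ell$ as $z$. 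The subtle part is that I will need $m$ to lie on $P(\ell,z)$ even though the hull is built from $z'$ rather than $z$. I plan to handle this by the same argument: $z \in H(S')$, so $P(t,z)\subseteq H(S'\cup\{t\})$.
\item For any misreport with peak $t'$, the outcome $y=\pi(\ell,S'\cup\{t'\})$ lies on $P(\ell,z)$, since $z$ belongs to the new hull.
\item Since $P(t,m)$ meets $P(\ell,z)$ only at $m$, I conclude that $P(t,y)=P(t,m)\cup P(m,y)$ for every $y\in P(\ell,z)$. Hence $m\in P(t,y)$, and single-peakedness gives $x=m\,P_i\,y$ whenever $y\neq x$.
\end{enumerate}
So no agent can manipulate, and $f^\ell$ is strategy-proof.

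\emph{(1)$\Rightarrow$(2).} Suppose some $P_i\in\mathbb{D}$ is not single-peaked on $G$. Then there are distinct $a,b$ with $b\in P(t,a)$ and $a\,P_i\,b$, where $t=\tau(P_i)$. Note that $b\neq t$, since otherwise $a\,P_i\,t$, contradicting $t$ being the top of $P_i$. Hence $t$ and $a$ lie in different components of $G-b$.
\begin{enumerate}
\item Starting from $b$, walk through $a$ and keep moving away from $b$ until reaching a leaf $\ell$. This leaf lies in the component of $G-b$ containing $a$, and $a\in P(\ell,b)$.
\item By minimal richness, let every other agent report a preference with top $b$. At the truthful profile the hull is $P(t,b)$. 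Every path from $\ell$ to this hull enters it at $b$, because $\ell$ and $t$ are in different components of $G-b$. So $f^\ell(P)=b$.
\item Let agent $i$ deviate, again using minimal richness, to a preference with top $a$. The hull becomes $P(a,b)$, and since $a\in P(\ell,b)$ the outcome is $a$.
\end{enumerate}
Because $a\,P_i\,b$, this is a manipulation, contradicting (1).

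The main obstacle I expect is the bookkeeping in the first direction. The key claim is that the truthful outcome is exactly the median $m$ of $\ell$, $t$ and $z$, and that every feasible outcome lies on $P(\ell,z)$. I would prove these carefully using uniqueness of paths in trees, namely that two paths sharing an endpoint overlap in an initial segment. The remaining steps follow directly from that structure.
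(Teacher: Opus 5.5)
Your proposal is correct and is essentially the paper's proof. For $(1\Rightarrow 2)$ you use the same profile: the other agents are peaked at $b$, $\ell$ is a leaf with $a\in P(\ell,b)$, and agent $i$ deviates to a peak at $a$. You also supply the $G-b$ component argument for $f^\ell(P)=b$, which the paper leaves implicit. For $(2\Rightarrow 1)$ you use the same key fact as Claim~\ref{C1}, that every attainable outcome lies on $P(\ell,z)$. Identifying the truthful outcome as the median of $\ell$, $t$, $z$ only merges the paper's two cases, $x=z$ versus $x$ strictly inside $P(\tau(P_i),z)$.

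Your caveat that $n\ge 2$ is needed is also correct. For $n=1$ every extreme rule returns the lone agent's peak and is trivially strategy-proof, so $(1\Rightarrow 2)$ fails, and the paper's construction tacitly uses a second agent.
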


\begin{proof} $(1 \Rightarrow 2)$  Let $G$ be a tree on $X$. Assume that the domain $\mathbb{D}$ is minimally rich and that every $f\in \mathcal{F}(G)$ is strategy-proof on $\mathbb{D}$. We show that every $P_i\in\mathbb{D}$ must be single-peaked on $G$.

Suppose, for contradiction, that there exists
$P_i^*\in\mathbb{D}$ that is not single-peaked on $G$.
Let $t=\tau(P_i^*)$. Since $P_i^*$ is not
single-peaked, there exist distinct alternatives $a,b\in X$ such that $b\in P(a,t)$ but $t \; P_i^* \; a \; P_i^* \; b$ .

Choose a leaf $\ell\in X$ such that $a\in P(\ell,b)$. Since $G$ is a tree, such a leaf $\ell$ exists.
By minimal richness of $\mathbb{D}$, there exists a preference
$P'_i\in\mathbb{D}$ such that $\tau(P'_i)=b$. Now consider a profile $P\in\mathbb{D}^n$ defined as follows: $P_1=P_i^*$, and $P_j= P'_i$ for all  $j\neq 1$. Note that $\tau(P)=\{b,t\}$ and hence $H(\tau(P)) = P(b,t)$. Consider a rule $f\in \mathcal{F}(G)$ whose associated leaf is $\ell$. Therefore, we have $f(P)=\pi(\ell,\tau(P))=b$.

Now at $P$, replace agent $1$'s preference by $P'_1$ such that $\tau(P'_1)=a$. By minimal richness of $\mathbb{D}$, such a preference exists. Let the resulting profile be $P'=(P'_1,P_{-1})$. Then $\tau(P')=\{a,b\}$ and hence $H(\tau(P'))=P(a,b)$. Therefore, we have $f(P')=\pi(\ell,\tau(P'))=a$. 

Note that $f(P') \; P_1 \; f(P)$. Hence $f$ is manipulable at $P$ by agent $1$ via $P'_1$, contradicting the assumption that $f$ is strategy-proof on $\mathbb{D}$. Therefore every $P_i \in \mathbb{D}$ must be single-peaked on $G$.

\bigskip

$(2 \Rightarrow 1)$ Let $G$ be a tree on $X$. Assume that the domain $\mathbb{D}$ is minimally rich and is single-peaked on the tree $G$. Consider any  $f\in \mathcal{F}(G)$. Let $\ell$ be the leaf associated with $f$. Then, for every profile $P\in\mathbb{D}^n$, $f(P)=\pi(\ell,\tau(P))$. We show that $f$ is strategy-proof on $\mathbb{D}$.

Fix an agent $i\in N$, a profile $(P_i,P_{-i})\in\mathbb{D}^n$, and a preference $P'_i\in\mathbb{D}$. Let $S=\tau(P_{-i})$.
Define $z=\pi(\ell,S)$, $x=\pi(\ell,S\cup\{\tau(P_i)\})$,  $y=\pi(\ell,S\cup\{\tau(P'_i)\})$. Thus $x=f(P_i,P_{-i})$ and $y=f(P'_i,P_{-i})$.

We first prove the following claim.

\begin{claim}\label{C1}
$x,y \in P(\ell,z)$.
\end{claim}

\begin{proof}
Note that $z$ is the distance minimizer of $\ell$ on $H(S)$. Since $G$ is a tree and $H(S)\subseteq H(S\cup\{\tau(P_i)\})$, the distance minimizer $x$ of $\ell$ on $H(S\cup\{\tau(P_i)\})$ must lie on $P(\ell,z)$. Hence $x\in P(\ell,z)$.

Similarly, since $H(S)\subseteq H(S\cup\{\tau(P'_i)\})$, the distance minimizer $y$ of $\ell$ on $H(S\cup\{\tau(P'_i)\})$ must lie on $P(\ell,z)$. Hence $y\in P(\ell,z)$.
\end{proof}

\medskip

We are now ready to complete the proof. We show that either $x=y$ or $x\;P_i\;y$. This implies that agent $i$ cannot manipulate at $P$ via $P'_i$. If $x=\tau(P_i)$ or $x=y$, there is nothing to prove. Hence assume that $x\neq \tau(P_i)$ and $x\neq y$. We now consider two cases.

\medskip
\textbf{Case 1:} $x\in H(S)$. Since $z=\pi(\ell,S)$, $x=\pi(\ell,S\cup\{\tau(P_i)\})$, and $H(S)\subseteq H(S\cup\{\tau(P_i)\})$, we must have $x=z$. Moreover, since $x\neq \tau(P_i)$, it follows that $\tau(P_i)\notin P(\ell,z)$. 

Since $x\neq y$ and, by Claim~\ref{C1}, $y\in P(\ell,z)$, we have
$y\in P(\ell,z)\setminus\{z\}$. Hence $z\in P(\tau(P_i),y)$ and therefore
\[
x\in P(\tau(P_i),y).
\]
Since $P_i$ is single-peaked on $G$, it follows that $x\;P_i\;y$.

\medskip

\textbf{Case 2.} $x\notin H(S)$. We first show that $\tau(P_i)\notin H(S)$. Suppose, to the contrary, that $\tau(P_i)\in H(S)$. Then $H(S)=H(S\cup\{\tau(P_i)\})$, which implies $z=x$. Hence $x\in H(S)$, a contradiction.

Let $s$ be the distance minimizer of $\tau(P_i)$ on $H(S)$, that is, $s=\pi(\tau(P_i),S)$. Since $x=\pi(\ell,S\cup\{\tau(P_i)\})$ and $x\notin H(S)$, it follows that $s=z$.

Let $(a_1=\tau(P_i),a_2,\ldots,a_k=z)$ be the path from $\tau(P_i)$ to $z$. Since $x\notin H(S)$ and $x\neq \tau(P_i)$, it must be that
\[
x\in P(\tau(P_i),z)\setminus\{\tau(P_i),z\}.
\]
Let $x=a_j$ for some $1<j<k$.

Let $(b_1=\ell,b_2,\ldots,b_m=x)$ be the path from $\ell$ to $x$. Since $x,y\in P(\ell,z)$ by Claim~\ref{C1}, we have
\[
y\in \{b_1,\ldots,b_m=x,a_{j+1},\ldots,a_k=z\}.
\]
Because $y\neq x$, it follows that $x\in P(\tau(P_i),y)$. Hence
$x\;P_i\;y$ by the single-peakedness of $P_i$.

\medskip

Therefore, in all cases, either $x=y$ or $x\;P_i\;y$. Since $i$, $(P_i,P_{-i})$, and $P'_i$ were arbitrary, $f$ is strategy-proof on $\mathbb{D}$.
\end{proof}

We conclude with several remarks concerning the theorem.

\begin{remark}\rm
The minimal richness assumption in Theorem~\ref{thm1} is made only for ease of exposition. Without this assumption, let
\[
X'=\{x\in X \mid x=\tau(P_i)\text{ for some }P_i\in\mathbb{D}\}
\]
denote the set of alternatives that arise as the top-ranked alternative in some preference in $\mathbb{D}$. We assume that $|X'|\ge 3$. For any preference $P_i$, let $P_i|_{X'}$ denote the restriction of $P_i$ to $X'$, defined by
\[
a\, P_i|_{X'}\, b \quad \text{if and only if} \quad a\, P_i\, b
\quad \text{for all } a,b\in X'.
\]
Let
\[
\mathbb{D}|_{X'}=\{P_i|_{X'} \mid P_i\in\mathbb{D}\}
\]
denote the restriction of $\mathbb{D}$ to $X'$. Then the theorem can be restated as follows: for any tree $G'$ on $X'$, the domain $\mathbb{D}|_{X'}$ is single-peaked on $G'$ if and only if every $f\in \mathcal{F}(G')$ is strategy-proof on $\mathbb{D}|_{X'}$.
\end{remark}

\begin{remark}\rm
If $\mathbb{D}$ is single-peaked on a tree $G$, then every extreme rule on $G$ is efficient on $\mathbb{D}$. Indeed, for any profile $P\in\mathbb{D}^n$, the alternative selected by an extreme rule lies in $H(\tau(P))$, and $H(\tau(P))$ coincides with the set of efficient alternatives at $P$.
\end{remark}

\begin{remark}\rm
\cite{schummer2002strategy} characterize the class of strategy-proof and onto rules on trees as extended median voter schemes (e.m.v.s.), which are defined through a family of generalized median voter schemes satisfying a consistency condition across paths. When restricted to domains that contain all single-peaked preferences on a tree, their characterization applies to our setting. In particular, every extreme rule is an e.m.v.s., and hence the class of extreme rules forms a subclass of the e.m.v.s. class.
However, the representation of e.m.v.s. is implicit. In contrast, extreme rules admit a simple and explicit representation: the outcome is obtained as the distance minimizer of a fixed leaf over the path hull of reported peaks. Thus, while extreme rules are a special case of e.m.v.s., they provide a simple and direct description of a subclass of strategy-proof rules on a tree.
\end{remark}

\bibliographystyle{ecta}
\bibliography{peaked}

\end{document}